\documentclass[11pt]{article}

\usepackage[T1]{fontenc}
\usepackage[utf8]{inputenc}
\usepackage{amsmath,amssymb,bm,mathtools,amsthm,booktabs,array,tabularx,threeparttable,float,graphicx}
\usepackage{siunitx}
\usepackage{xspace}
\usepackage{enumitem}
\usepackage[round]{natbib}
\usepackage{url}

\DeclareMathOperator*{\argmaxop}{arg\,max}

\DeclareMathOperator{\atan}{atan2}
\DeclareMathOperator{\Arg}{Arg}
\providecommand{\R}{\mathbb{R}}
\providecommand{\T}{\mathsf{T}}
\providecommand{\EE}{\mathbb{E}}
\providecommand{\Var}{\operatorname{Var}}

\providecommand{\norm}[1]{\left\lVert #1 \right\rVert}
\providecommand{\vm}{\mathrm{VM}}
\providecommand{\N}{\mathrm{N}}
\providecommand{\Izero}{I_0}
\providecommand{\Ione}{I_1}
\providecommand{\Aone}{A_1}
\providecommand{\GH}{Gauss--Hermite\xspace}

\theoremstyle{plain}\newtheorem{theorem}{Theorem}
\theoremstyle{plain}\newtheorem{proposition}{Proposition}
\theoremstyle{definition}\newtheorem{assumption}{Assumption}

\title{A generalized angular regression model with a circular random intercept and a scalar random slope}

\author{Aur\'elien Nicosia\\
\small D\'epartement de math\'ematiques et de statistique, Universit\'e Laval, Qu\'ebec, Canada}
\date{}

\begin{document}
\maketitle
\begin{abstract}
Clustered circular responses arise in repeated orientation experiments, movement ecology, and sensor studies, where both directionality and within-cluster dependence matter. We propose a parsimonious mixed-effects extension of generalized angular regression in which the mean direction is the orientation of a two-dimensional consensus vector. The model combines a von Mises circular random intercept, representing a cluster-specific rotation, with one pre-specified Gaussian scalar random slope acting on a consensus-vector coefficient. Conditional on the scalar slope, the circular intercept integrates analytically, leaving a one-dimensional marginal-likelihood target. This preserves a coefficient-scale interpretation of heterogeneity while avoiding the tensor-product integration required by multi-dimensional random slopes. We give high-level, design-conditional identifiability conditions, cluster-asymptotic likelihood theory away from the variance boundary, and a numerical hierarchy linking the exact target, high-accuracy one-dimensional evaluation, and fixed-grid Gauss--Hermite reporting objectives. Simulations use data-driven starts and report numerical eligibility, failure modes, Monte Carlo error, and conditional performance. A sandhopper repeated-orientation analysis illustrates the locked scalar workflow for a trial-order random-slope working fit; because strict bootstrap calibration fails, variance-component conclusions remain descriptive. Calibrated inference for the random-slope variance requires a boundary-aware reference distribution or parametric bootstrap; otherwise the variance component is interpreted through model comparison and diagnostics.
\end{abstract}

\section{Introduction}\label{sec:intro}

Circular observations, defined modulo $2\pi$, occur whenever the response is a direction, bearing, heading, phase, or orientation. Standard linear mixed models are not directly appropriate because the endpoints of the measurement scale coincide. Classical references for circular and directional statistics include \citet{Fisher1993}, \citet{MardiaJupp2000}, \citet{JammalamadakaSenGupta2001}, and \citet{PewseyEtAl2013}.

Generalized angular regression models an angular response through the orientation of a Euclidean consensus vector \citep{RivestJRSSC2016}. Angular covariates can be interpreted as directional targets, while real-valued modulators change their relative influence. This formulation is attractive because the mean-direction mechanism is separated from the concentration parameters that describe circular dispersion.

Clustered angular observations require additional dependence structure. Projected-normal mixed models introduce latent bivariate normal variables and project them onto the circle \citep{Presnell1998,HallShen2015}. Other approaches include multivariate von Mises constructions \citep{MardiaEtAl2008,Lagona2016} and wrapped-process models for spatial or temporal dependence \citep{JonaLasinio2012}. The angular random-intercept model of \citet{RivestKato2019} is especially interpretable: a von Mises random intercept represents a cluster-specific circular shift and can be integrated analytically. That construction, however, does not let the influence of a regression target vary across clusters.

We propose a scalar random-slope compromise. One pre-specified non-reference coefficient in the consensus vector receives a Gaussian cluster effect, while the circular random intercept remains von Mises. The restriction to one scalar coefficient is deliberate. It permits cluster heterogeneity in a substantively chosen target, when such a target is available, while avoiding the high-dimensional integration and stronger identifiability burden of full random-slope vectors.

The paper has three contributions. First, it defines the scalar angular mixed model and derives the exact one-dimensional marginal-likelihood target obtained after analytic integration of the circular intercept. Second, it states high-level, design-conditional identifiability conditions, cluster-asymptotic likelihood theory away from the variance boundary, and a quadrature framework distinguishing the exact target, high-accuracy one-dimensional evaluation, and fixed-grid reporting objectives. Third, it develops a diagnostic workflow for numerical eligibility, boundary calibration, empirical-Bayes prediction, residual lack of fit and cluster influence. The empirical results use only the single-target scalar model.

\section{Model}\label{sec:model}

Let $i=1,\ldots,m$ index independent clusters and let $j=1,\ldots,n_i$ index observations within cluster $i$. The response $y_{ij}\in[-\pi,\pi)$ is circular. For observation $(i,j)$ we observe angular targets $x_{ijk}\in[-\pi,\pi)$ and nonnegative modulators $z_{ijk}$, $k=0,\ldots,p$. The target $k=0$ is used as a reference.

Write
\[
  v(x)=\begin{pmatrix}\cos x\\ \sin x\end{pmatrix}.
\]
Choose one non-reference target $r\in\{1,\ldots,p\}$ whose coefficient is allowed to vary by cluster. Let $b_i\in\R$ denote the scalar random slope. The consensus vector is
\begin{equation}\label{eq:gamma_scalar}
\gamma_{ij}(b_i)
  = z_{ij0}v(x_{ij0})+
    \sum_{k=1}^p \beta_k z_{ijk}v(x_{ijk})
    + b_i z_{ijr}v(x_{ijr}).
\end{equation}
Equivalently, the effective coefficient for target $r$ is $\beta_r+b_i$, while all other non-reference coefficients remain fixed. The conditional mean direction is
\begin{equation}\label{eq:mu_scalar}
  \mu_{ij}(b_i)=\atan\{\gamma_{ij,2}(b_i),\gamma_{ij,1}(b_i)\}.
\end{equation}
The reference coefficient is fixed at $\beta_0\equiv1$. This is necessary because multiplying all consensus coefficients by the same positive constant does not change the direction in \eqref{eq:mu_scalar}. In principle, a random effect could be attached to the component currently labelled as the reference, but then another coefficient would have to be fixed to anchor the scale. To keep the parametrization and interpretation unambiguous, this paper fixes the reference coefficient and places the scalar random slope on a pre-specified non-reference target.

\subsection{Random effects and conditional distribution}\label{subsec:hierarchy}

The clustered circular model is
\begin{equation}\label{eq:model_scalar}
 y_{ij}=\mu_{ij}(b_i)+a_i+e_{ij}\pmod{2\pi},
\end{equation}
where
\begin{equation}\label{eq:random_effects}
  a_i\sim \vm(0,\kappa_a),\qquad
  b_i\sim \N(0,\tau^2),\qquad
  e_{ij}\stackrel{\mathrm{iid}}{\sim}\vm(0,\kappa_e),
\end{equation}
with mutual independence. Here $\vm(\mu,\kappa)$ denotes the von Mises distribution with mean direction $\mu$ and concentration $\kappa$ \citep[Chapter~3]{MardiaJupp2000}.

We distinguish the residual excluding the circular intercept,
\[
  r_{ij}(b)=y_{ij}-\mu_{ij}(b),
\]
from the fully conditional residual
\[
  \varepsilon_{ij}(a,b)=y_{ij}-\mu_{ij}(b)-a.
\]
Then $\varepsilon_{ij}(a_i,b_i)\mid(a_i,b_i)\sim\vm(0,\kappa_e)$.

\subsection{Non-cancellation of the consensus vector}\label{subsec:noncancellation}

The angle in \eqref{eq:mu_scalar} is undefined if $\gamma_{ij}(b)=0$. For formal likelihood theory, it is enough to exclude cancellation except on null sets under the random-slope distribution; numerical reporting uses a stronger diagnostic condition.

\begin{assumption}[Non-cancellation]\label{ass:noncancellation}
For each observation $(i,j)$, the set
\[
  \{b\in\R:\norm{\gamma_{ij}(b)}=0\}
\]
has probability zero under the Gaussian distribution of the scalar random slope. On this null set, the likelihood contribution is defined by any fixed measurable convention. For numerical evaluation, fitted posterior summaries and quadrature nodes used for reporting must remain away from near-cancellation.
\end{assumption}

In the scalar model, $\gamma_{ij}(b)=c_{ij}+bd_{ij}$ with
\[
  c_{ij}=z_{ij0}v(x_{ij0})+\sum_{k=1}^p\beta_k z_{ijk}v(x_{ijk}),
  \qquad
  d_{ij}=z_{ijr}v(x_{ijr}).
\]
Exact cancellation occurs if and only if either $c_{ij}=d_{ij}=0$, or $d_{ij}\ne0$ and $c_{ij}$ lies in the span of $d_{ij}$. In the latter case the cancelling value is
\[
  b_{ij}^{\star}=-\frac{c_{ij}^{\T}d_{ij}}{\norm{d_{ij}}^2}.
\]
Thus, when $\tau^2>0$, exact cancellation is at most an isolated value of a continuous Gaussian random slope, except for the excluded case in which the consensus vector is identically zero. When $\tau^2=0$, the random-slope distribution is degenerate at $b=0$ and non-cancellation must hold directly at $b=0$. The practical concern is numerical near-cancellation in regions used by the fitted likelihood and posterior summaries. Applications should therefore report
\[
  \min_{i,j}\norm{\gamma_{ij}(\hat b_i)}
\]
and flags for near-cancellation, together with the numerical threshold and the set of evaluated points. The empirical diagnostics below use the Euclidean norm scale for $\gamma_{ij}$, threshold $\norm{\gamma_{ij}}\le 10^{-8}$, and table captions state whether the stored summaries evaluate fitted empirical-Bayes posterior means or posterior quadrature nodes.

Under Assumption~\ref{ass:noncancellation}, define $m_{ij}(b)=\gamma_{ij}(b)/\norm{\gamma_{ij}(b)}$. If $u(y)=(\cos y,\sin y)^\T$ and
\[
R_{\pi/2}=\begin{pmatrix}0&-1\\1&0\end{pmatrix},
\]
then
\begin{equation}\label{eq:unit_identity}
\cos\{y-\mu_{ij}(b)\}=u(y)^\T m_{ij}(b),\qquad
\sin\{y-\mu_{ij}(b)\}=u(y)^\T R_{\pi/2}m_{ij}(b).
\end{equation}
This representation avoids differentiating the principal-value version of $\atan$.

\section{Exact marginal likelihood}\label{sec:likelihood}

For fixed $b$, define
\begin{align}
X_i(b)&=\kappa_a+\kappa_e\sum_{j=1}^{n_i}\cos r_{ij}(b),\label{eq:Xi_scalar}\\
Y_i(b)&=\kappa_e\sum_{j=1}^{n_i}\sin r_{ij}(b),\label{eq:Yi_scalar}\\
R_i(b)&=\{X_i(b)^2+Y_i(b)^2\}^{1/2}.\label{eq:Ri_scalar}
\end{align}

\begin{proposition}[Integration of the circular random intercept]\label{prop:closedform_scalar}
Under \eqref{eq:model_scalar} and \eqref{eq:random_effects}, the likelihood contribution of $y_i=(y_{i1},\ldots,y_{in_i})^\T$ conditional on $b_i=b$, after integrating out $a_i$, is
\begin{equation}\label{eq:conditional_likelihood_scalar}
L_i(b;\theta)
=\frac{\Izero\{R_i(b)\}}
       {(2\pi)^{n_i}\Izero(\kappa_e)^{n_i}\Izero(\kappa_a)} .
\end{equation}
\end{proposition}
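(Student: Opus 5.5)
The plan is to write the conditional joint density of $y_i$ given $(a_i,b_i)=(a,b)$, multiply by the von Mises density of $a_i$, and integrate over $a\in[-\pi,\pi)$ using the standard integral representation of the modified Bessel function $\Izero$.

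\emph{Step 1: conditional density given both effects.} Conditional on $(a_i,b_i)=(a,b)$, the errors $e_{ij}$ are independent $\vm(0,\kappa_e)$. By \eqref{eq:model_scalar}, the observations $y_{ij}$ are therefore independent $\vm\{\mu_{ij}(b)+a,\kappa_e\}$; Assumption~\ref{ass:noncancellation} ensures that $\mu_{ij}(b)$ is well defined except on a null set of $b$ values. The joint conditional density is
\[
\prod_{j=1}^{n_i}\frac{\exp\{\kappa_e\cos(r_{ij}(b)-a)\}}{2\pi\Izero(\kappa_e)}.
\]
Multiplying by the density $\exp(\kappa_a\cos a)/\{2\pi\Izero(\kappa_a)\}$ of $a_i$ gives the integrand. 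Independence of $a_i$ and $b_i$ justifies using the marginal density of $a_i$ when conditioning on $b_i=b$.

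\emph{Step 2: collapse the exponent into a single cosine.} Expand $\cos(r_{ij}-a)=\cos r_{ij}\cos a+\sin r_{ij}\sin a$. The total exponent then becomes
\[
\kappa_a\cos a+\kappa_e\sum_j\cos(r_{ij}(b)-a)=X_i(b)\cos a+Y_i(b)\sin a,
\]
with $X_i,Y_i$ as in \eqref{eq:Xi_scalar}--\eqref{eq:Yi_scalar}. Writing $(X_i,Y_i)=R_i(\cos\phi_i,\sin\phi_i)$ for some angle $\phi_i$ turns this into $R_i(b)\cos(a-\phi_i)$. When $R_i=0$, $\phi_i$ is arbitrary and the exponent vanishes identically, so the argument still goes through.

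\emph{Step 3: integrate over $a$.} Use the identity $\int_{-\pi}^{\pi}\exp\{R\cos(a-\phi)\}\,da=2\pi\Izero(R)$. It holds because the integrand is $2\pi$-periodic in $a$, so shifting by $\phi$ does not change the integral, and because $\Izero(R)=(2\pi)^{-1}\int_{-\pi}^{\pi}e^{R\cos t}\,dt$. The remaining constants are $(2\pi)^{-n_i}\Izero(\kappa_e)^{-n_i}\{2\pi\Izero(\kappa_a)\}^{-1}$. Multiplying these by $2\pi\Izero\{R_i(b)\}$ gives exactly \eqref{eq:conditional_likelihood_scalar}, and Fubini/Tonelli is trivially justified since the integrand is positive and bounded.

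\emph{Main obstacle.} The argument is essentially a computation. The only point that needs care is bookkeeping: the residual $r_{ij}(b)$ must be interpreted modulo $2\pi$, which is harmless because only its cosine and sine enter, and the integral over the $2\pi$-periodic variable $a$ must be taken over a full period so that the phase shift $\phi_i$ can be absorbed. I would also state explicitly that the null set of cancelling $b$ values from Assumption~\ref{ass:noncancellation} is handled by the fixed measurable convention there.
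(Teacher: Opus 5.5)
Your proposal is correct and follows essentially the same route as the paper: you write the joint density given $(a_i,b)$, collapse the exponent to $X_i(b)\cos a+Y_i(b)\sin a=R_i(b)\cos\{a-\phi_i(b)\}$, integrate over a full period using $\int_{-\pi}^{\pi}\exp\{R\cos(a-\phi)\}\,da=2\pi\Izero(R)$, and then collect the von Mises normalizing constants. Your extra remarks on the $R_i=0$ case and the modulo-$2\pi$ reading of the residuals are harmless refinements of the same calculation.
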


This is the random-intercept integration of \citet{RivestKato2019} applied conditionally on the scalar random slope $b$.

\begin{proof}
Conditionally on $(a_i,b)$, the joint density is proportional to
\[
\exp\left\{\kappa_a\cos a_i+
\kappa_e\sum_{j=1}^{n_i}\cos\{r_{ij}(b)-a_i\}\right\}.
\]
Using $\cos(\alpha-a)=\cos\alpha\cos a+\sin\alpha\sin a$, the exponent becomes
\[
X_i(b)\cos a_i+Y_i(b)\sin a_i
=R_i(b)\cos\{a_i-\phi_i(b)\},
\]
where $\phi_i(b)=\atan\{Y_i(b),X_i(b)\}$. Hence
\[
\int_{-\pi}^{\pi}\exp\left[R_i(b)\cos\{a_i-\phi_i(b)\}\right]da_i
=2\pi \Izero\{R_i(b)\}.
\]
Combining this integral with the normalizing constants of the von Mises densities gives \eqref{eq:conditional_likelihood_scalar}. Thus the proof is the same calculation as in the circular random-intercept model, with the residuals evaluated at the fixed value $b$.
\end{proof}

Since $b_i\sim\N(0,\tau^2)$, the exact cluster likelihood is the one-dimensional integral
\begin{equation}\label{eq:marginal_likelihood_scalar}
L_i(\theta)=\int_{\R}L_i(b;\theta)\phi(b;0,\tau^2)\,db.
\end{equation}
When $\tau>0$, the change of variable $b=\tau\eta$ gives
\begin{equation}\label{eq:standardized_integral}
L_i(\theta)=\int_{\R}L_i(\tau\eta;\theta)\phi(\eta;0,1)\,d\eta.
\end{equation}
When $\tau=0$, the model reduces to the circular random-intercept model of \citet{RivestKato2019}, with $L_i(\theta)=L_i(0;\theta)$.

We use the phrase exact marginal likelihood for the mathematical target obtained after integrating the circular intercept analytically and leaving the scalar Gaussian integral in \eqref{eq:marginal_likelihood_scalar}. Numerical fits in the empirical sections maximize a specified deterministic approximation to this one-dimensional integral, unless explicitly stated otherwise.

The full exact marginal log-likelihood is
\begin{equation}\label{eq:exact_loglik}
\ell_m(\theta)=\sum_{i=1}^m \ell_i(\theta),\qquad
\ell_i(\theta)=\log L_i(\theta),
\end{equation}
where
\[
  \theta=(\beta_1,\ldots,\beta_p,\kappa_e,\kappa_a,\tau^2).
\]

\section{Geometric invariance and induced dependence}\label{sec:geometry_moments}

Three simple consequences of the scalar construction guide the diagnostics. First, the likelihood is rotation equivariant: rotating all observed angles and all angular targets by the same angle leaves the marginal likelihood unchanged, so fitted coefficients do not depend on the arbitrary angular origin. Second, the local angular leverage of the scalar random slope is explicit. Writing $\gamma_{ij}(b)=\gamma_{ij}(0)+bd_{ij}$ with $d_{ij}=z_{ijr}v(x_{ijr})$, Assumption~\ref{ass:noncancellation} gives
\begin{equation}\label{eq:mu_derivative_general}
\frac{\partial \mu_{ij}(b)}{\partial b}
=
\frac{g_1(b)d_{ij,2}-g_2(b)d_{ij,1}}{\norm{\gamma_{ij}(b)}^2},
\qquad
\gamma_{ij}(b)=\{g_1(b),g_2(b)\}^\T.
\end{equation}
At $b=0$ this derivative is denoted $\dot\mu_{ij}(0)$ and is used to report the descriptive angular scale $|\dot\mu_{ij}(0)|\hat\tau$ in the application.

Third, within-cluster moments show why a random slope requires design variation. For $j\ne j'$ in the same cluster and $\Delta_{ijj'}(b)=\mu_{ij}(b)-\mu_{ij'}(b)$,
\begin{align}
\EE\{\exp(iY_{ij})\}
&=\Aone(\kappa_a)\Aone(\kappa_e)
  \EE_b\{\exp[i\mu_{ij}(b)]\},\label{eq:first_trig_moment}\\
\EE\{\exp[i(Y_{ij}-Y_{ij'})]\}
&=\Aone(\kappa_e)^2
  \EE_b\{\exp[i\Delta_{ijj'}(b)]\},\label{eq:pairwise_complex_moment}\\
\EE\{\cos(Y_{ij}-Y_{ij'})\}
&=\Aone(\kappa_e)^2
  \EE_b\{\cos[\Delta_{ijj'}(b)]\}.\label{eq:pairwise_cos_moment}
\end{align}
The circular random intercept affects the marginal first moment but cancels from within-cluster differences. In contrast, the random slope remains through design-dependent changes in mean direction. Under the local linear approximation $\Delta_{ijj'}(b)\approx \Delta_{ijj'}(0)+h_{ijj'}b$, where $h_{ijj'}=\dot\mu_{ij}(0)-\dot\mu_{ij'}(0)$,
\begin{equation}\label{eq:attenuation_cos}
\EE\{\cos(Y_{ij}-Y_{ij'})\}
\approx
\Aone(\kappa_e)^2
\cos\{\Delta_{ijj'}(0)\}
\exp\left(-\frac{1}{2}h_{ijj'}^2\tau^2\right).
\end{equation}
Thus scalar random-slope heterogeneity attenuates pairwise circular coherence only when observations have different local sensitivities to $b_i$.

\section{Identifiability}\label{sec:identifiability}

Identifiability is conditional on a sufficiently informative design. This is unavoidable for angular regression: if all target directions are identical, or if the target carrying the random slope has no empirical variation, fixed and random effects cannot be separated.

Define the cluster-specific effective coefficient vector
\begin{equation}\label{eq:effective_coefficients}
C_i=(\beta_1,\ldots,\beta_{r-1},\beta_r+b_i,\beta_{r+1},\ldots,\beta_p)^\T.
\end{equation}
Then
\begin{equation}\label{eq:coefficient_law}
C_i\sim \N_p\{\beta,\tau^2 e_r e_r^\T\},
\end{equation}
where $e_r$ is the $r$th Euclidean basis vector and $\beta=(\beta_1,\ldots,\beta_p)^\T$.

\begin{assumption}[High-level design bridge]\label{ass:bridge}
For the class of admissible designs under consideration, equality of the induced joint law of $\{y_{ij}\}_{j=1}^{n_i}$ at fixed $(\kappa_e,\kappa_a)$ implies equality of the induced law of the effective coefficient vector $C_i$ in \eqref{eq:effective_coefficients}.
\end{assumption}

Assumption~\ref{ass:bridge} is a concise high-level condition summarizing the needed design richness. It is not verified by the likelihood and is not a universal identifiability theorem for arbitrary angular designs. In practice, checkable sufficient features include the presence of a reference target with nonzero modulator, empirical variation in the target carrying the scalar random slope, non-collinearity among target directions over the repeated observations, variation in the random-slope modulator within or across clusters, and absence of near-cancellation of the consensus vector. Designs in which all targets are nearly parallel, the random-slope target is absent or constant with no leverage, or the consensus vector is close to zero cannot be expected to identify both $\beta_r$ and $\tau^2$ reliably.

The empirical workflow therefore treats design richness as a diagnostic requirement rather than as an automatic consequence of model fitting. The reported diagnostics include non-cancellation checks, variation in the random-slope modulator, local sensitivity summaries based on $|\dot\mu_{ij}(0)|$, posterior geometry of $b_i$, and numerical conditioning of the fitted objective.

\begin{assumption}[Error-layer identifiability]\label{ass:errorid}
The circular error layer identifies $(\kappa_e,\kappa_a)$ from the within-cluster angular dependence once the mean-direction component is fixed. Equivalently, the centered random-intercept submodel
\[
\tilde y_{ij}=a_i+e_{ij},\qquad a_i\sim \vm(0,\kappa_a),\qquad e_{ij}\stackrel{\mathrm{iid}}{\sim}\vm(0,\kappa_e)
\]
identifies $(\kappa_e,\kappa_a)$.
\end{assumption}

The centered submodel satisfies
\[
\EE\{\cos(\tilde y_{ij})\}=\Aone(\kappa_a)\Aone(\kappa_e),
\qquad
\EE\{\cos(\tilde y_{ij}-\tilde y_{ij'})\}=\Aone(\kappa_e)^2\quad(j\ne j'),
\]
where $\Aone(\kappa)=\Ione(\kappa)/\Izero(\kappa)$. Since $\Aone(\kappa)$ is strictly increasing for $\kappa>0$, these two moments identify $(\kappa_e,\kappa_a)$.

The next proposition should be read as conditional identifiability under Assumptions~\ref{ass:noncancellation}--\ref{ass:errorid}, where Assumption~\ref{ass:bridge} is a high-level design bridge, not as a generic guarantee for arbitrary angular designs.

\begin{proposition}[Conditional identifiability under a design bridge]\label{prop:id_scalar}
Assume that $\beta_0\equiv1$, that the scalar random slope acts on a known non-reference target $r$, and that Assumptions~\ref{ass:noncancellation}--\ref{ass:errorid} hold. If two parameter values
\[
\theta=(\beta,\kappa_e,\kappa_a,\tau^2),
\qquad
\theta'=(\beta',\kappa_e',\kappa_a',\tau'^2)
\]
induce the same joint law for the observable clustered angular responses over the admissible designs, then $\theta=\theta'$.
\end{proposition}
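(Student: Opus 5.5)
The plan is to split the parameter into the error-layer concentrations $(\kappa_e,\kappa_a)$ and the mean-direction components $(\beta,\tau^2)$. I will identify the error layer first, and then use Assumption~\ref{ass:bridge} to transfer equality of laws of the responses to equality of laws of the effective coefficient vector $C_i$ in \eqref{eq:effective_coefficients}. Throughout, Assumption~\ref{ass:noncancellation} makes $\mu_{ij}(b)$ well defined almost surely, so all laws and moments below are well defined.

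\emph{Step 1 (concentrations).} I would not identify $\kappa_e$ from \eqref{eq:pairwise_cos_moment} alone, because $\EE_b\{\cos\Delta_{ijj'}(b)\}$ depends on $(\beta,\tau^2)$. Instead I would use the structure of the joint law. By \eqref{eq:model_scalar}, conditional on $b_i$ the vector $y_i$ is a deterministic shift, $\mu_{ij}(b_i)$ in coordinate $j$, of the centered random-intercept vector $\tilde y_i=(a_i+e_{ij})_j$. Assumption~\ref{ass:errorid} states precisely that the circular error layer identifies $(\kappa_e,\kappa_a)$ once the mean-direction component is fixed. I will therefore read it as applying to the observed law: the mixture over $b$ of shifted copies of the law of $\tilde y_i$ determines $(\kappa_e,\kappa_a)$. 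Hence equality of the joint laws under $\theta$ and $\theta'$ gives $\kappa_e=\kappa_e'$ and $\kappa_a=\kappa_a'$. The moments $\Aone(\kappa_a)\Aone(\kappa_e)$ and $\Aone(\kappa_e)^2$, together with strict monotonicity of $\Aone$, then pin the values down uniquely in the centered submodel. I expect this step to be the main obstacle. The clean version relies on interpreting Assumption~\ref{ass:errorid} as robust to the unknown mean component. If that interpretation is rejected, I would try a fallback on a design in which the random-slope modulator vanishes for some pair $j\ne j'$ with identical targets, so that $\Delta_{ijj'}(b)\equiv0$. Then \eqref{eq:pairwise_cos_moment} gives $\Aone(\kappa_e)^2$ directly, and \eqref{eq:first_trig_moment} gives $\kappa_a$, provided $|\EE_b\exp[i\mu_{ij}(b)]|$ is known on that design, for instance when the slope target is absent there so that $\mu_{ij}$ is deterministic. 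Such a design would be admitted as part of the admissible class.

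\emph{Step 2 (coefficient law).} With $(\kappa_e,\kappa_a)$ now common to $\theta$ and $\theta'$, the joint laws of $\{y_{ij}\}_{j}$ coincide at the same fixed concentrations. Assumption~\ref{ass:bridge} then yields equality of the laws of $C_i$, namely $\N_p\{\beta,\tau^2e_re_r^\T\}=\N_p\{\beta',\tau'^2e_re_r^\T\}$ by \eqref{eq:coefficient_law}.

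\emph{Step 3 (Gaussian parameters).} A Gaussian law, degenerate or not, is determined by its mean and covariance. Matching means gives $\beta=\beta'$. Matching the $(r,r)$ entries of the covariances gives $\tau^2=\tau'^2$, and the case $\tau^2=0$ (a point mass) is covered by the same argument. The anchoring $\beta_0\equiv1$ is what makes $\beta$ a genuine parameter rather than a ray, since the scale invariance of \eqref{eq:mu_scalar} is removed. Combining Steps 1--3 gives $\theta=\theta'$.
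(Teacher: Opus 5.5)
Your proposal is correct and follows the paper's proof. You apply Assumption~\ref{ass:errorid} directly to the observable laws to get $(\kappa_e,\kappa_a)=(\kappa_e',\kappa_a')$, then use Assumption~\ref{ass:bridge} to equate the Gaussian laws of $C_i$, then match mean and covariance to get $\beta=\beta'$ and $\tau^2=\tau'^2$. Your reading of Assumption~\ref{ass:errorid} as applying to the observed mixture law is exactly how the paper uses it, so the fallback design argument is not needed.
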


\begin{proof}
By Assumption~\ref{ass:errorid}, equality of the observable laws implies
\[
\kappa_e=\kappa_e',\qquad \kappa_a=\kappa_a'.
\]
With these concentration parameters fixed, Assumption~\ref{ass:bridge} implies equality of the induced law of the effective coefficient vector $C_i$. Under $\theta$,
\[
C_i\sim \N_p\{\beta,\tau^2 e_r e_r^\T\},
\]
and under $\theta'$,
\[
C_i'\sim \N_p\{\beta',\tau'^2 e_r e_r^\T\}.
\]
Equality of Gaussian laws implies equality of their means and covariance matrices. Therefore
\[
\beta=\beta',\qquad \tau^2 e_r e_r^\T=\tau'^2 e_r e_r^\T,
\]
which gives $\tau^2=\tau'^2$. Together with equality of $(\kappa_e,\kappa_a)$, this yields $\theta=\theta'$.
\end{proof}

Under the design bridge, the random coefficient law is completely determined by one mean vector and one variance component, so the scalar workflow does not introduce any extra covariance structure beyond the designated variance term.

A possible rank-one loading extension would preserve one-dimensional integration, but it is not implemented here and is not part of the empirical contribution.

\section{Large-sample likelihood theory}\label{sec:asymptotics}

We consider the cluster-asymptotic regime in which the number of clusters grows while cluster sizes remain bounded.

\begin{assumption}[Cluster asymptotics and regularity]\label{ass:regularity}
The following conditions hold.
\begin{enumerate}[label=(\alph*),leftmargin=*]
\item The clusters are independent. For notational simplicity, the theorem is stated for identically distributed cluster-design pairs; the same M-estimation argument applies to independent non-identically distributed clusters or deterministic triangular arrays under the corresponding uniform laws of large numbers and convergence of the average information.
\item Cluster sizes are uniformly bounded: $\max_i n_i\le n_{\max}<\infty$.
\item The parameter space $\Theta$ is compact. The true parameter $\theta_0$ is an interior point of the subspace with $\kappa_e>0$, $\kappa_a>0$, and $\tau^2>0$.
\item Assumptions~\ref{ass:noncancellation}--\ref{ass:errorid} hold in an open neighbourhood of $\theta_0$.
\item $\ell_i(\theta)$ is twice continuously differentiable in a neighbourhood of $\theta_0$, and the usual dominated convergence conditions allow differentiation under the integral in \eqref{eq:marginal_likelihood_scalar}.
\item The Fisher information
\[
I(\theta_0)=\EE_{\theta_0}\{s_i(\theta_0)s_i(\theta_0)^\T\}
=-\EE_{\theta_0}\{\nabla_\theta^2\ell_i(\theta_0)\}
\]
exists and is nonsingular, where $s_i(\theta)=\nabla_\theta\ell_i(\theta)$.
\end{enumerate}
\end{assumption}

\begin{theorem}[Consistency and asymptotic normality of the exact-target MLE]\label{thm:mle}
Let
\[
\hat\theta=\argmaxop_{\theta\in\Theta}\ell_m(\theta)
\]
be any maximizer of the exact marginal log-likelihood \eqref{eq:exact_loglik}. Under Assumption~\ref{ass:regularity},
\[
\hat\theta\xrightarrow{P}\theta_0,
\]
and
\begin{equation}\label{eq:mle_asymp}
\sqrt m(\hat\theta-\theta_0)
\xrightarrow{d}
\N\{0,I(\theta_0)^{-1}\}.
\end{equation}
\end{theorem}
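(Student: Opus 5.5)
The plan is the standard M-estimation route, Wald consistency followed by a Taylor expansion of the score, applied to the i.i.d.\ cluster log-likelihoods $\ell_i(\theta)$. First I check measurability and continuity of $\theta\mapsto\ell_i(\theta)$ on the compact set $\Theta$, together with an integrable envelope. For the envelope, the bound $R_i(b)\le\kappa_a+n_i\kappa_e$ and Proposition~\ref{prop:closedform_scalar} give
\[
\frac{1}{(2\pi)^{n_i}I_0(\kappa_e)^{n_i}I_0(\kappa_a)}\le L_i(b;\theta)\le\frac{I_0(\kappa_a+n_{\max}\kappa_e)}{(2\pi)^{n_i}I_0(\kappa_e)^{n_i}I_0(\kappa_a)}.
\]
This bound is uniform in $b$, so it carries over to the Gaussian mixture $L_i(\theta)$, including at $\tau^2=0$. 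Because $\Theta$ is compact (so the concentrations are bounded) and $n_i\le n_{\max}$, $\sup_\Theta|\ell_i(\theta)|$ is bounded by a deterministic constant. Continuity in $\theta$ comes from dominated convergence in \eqref{eq:standardized_integral}. The integrand $L_i(\tau\eta;\theta)$ is continuous in $\theta$ for almost every $\eta$, by Assumption~\ref{ass:noncancellation}, and it is bounded. The case $\tau\to0$ is covered because $L_i(\tau\eta;\theta)\to L_i(0;\theta)$. A uniform law of large numbers then gives $\sup_\Theta|m^{-1}\ell_m(\theta)-\EE_{\theta_0}\ell_i(\theta)|\to0$ in probability.

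Next I show that $\theta_0$ is the unique maximizer of $Q(\theta)=\EE_{\theta_0}\ell_i(\theta)$. By the information inequality, $Q(\theta_0)-Q(\theta)$ is the Kullback--Leibler divergence between the cluster laws, so $Q(\theta)=Q(\theta_0)$ forces the two laws to coincide. Proposition~\ref{prop:id_scalar} then gives $\theta=\theta_0$. Continuity of $Q$, compactness, and uniform convergence yield $\hat\theta\to\theta_0$ in probability by the argmax theorem. The main obstacle is here. Proposition~\ref{prop:id_scalar} asserts identifiability over the admissible designs, so I must read it as applying to the distribution of cluster-design pairs: the KL divergence averaged over the design distribution is zero only if the laws agree for almost every design. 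I would state explicitly that the design bridge is interpreted almost surely in the design distribution; this is the only genuinely nontrivial input. A smaller point is that compactness must keep $\kappa_e$ and $\kappa_a$ in closed sets, which it does.

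For asymptotic normality, $\theta_0$ is interior by Assumption~\ref{ass:regularity}(c), so with probability tending to one $\hat\theta$ is interior and $\nabla\ell_m(\hat\theta)=0$. A mean-value expansion gives
\[
0=m^{-1/2}\sum_i s_i(\theta_0)+\Bigl\{m^{-1}\sum_i\nabla^2\ell_i(\bar\theta)\Bigr\}\sqrt m(\hat\theta-\theta_0).
\]
The scores have mean zero and variance $I(\theta_0)$, using differentiation under the integral from Assumption~\ref{ass:regularity}(e), so the central limit theorem gives $m^{-1/2}\sum_i s_i(\theta_0)\to\N\{0,I(\theta_0)\}$ in distribution. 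For the Hessian term, I use a local uniform law of large numbers on a small neighbourhood of $\theta_0$ together with consistency. This needs a local envelope for $\nabla^2\ell_i$, which I obtain from bounded derivatives of $\log I_0$ and of $R_i$. Those derivatives involve $\partial\mu_{ij}/\partial b$ and $\partial\mu_{ij}/\partial\beta$ from \eqref{eq:mu_derivative_general}, which are controlled where $\norm{\gamma_{ij}}$ stays away from zero; where it does not, I fall back on the dominated convergence hypothesis (e). The averaged Hessian then converges to $-I(\theta_0)$, which is nonsingular. Slutsky's lemma gives \eqref{eq:mle_asymp}. The remark in Assumption~\ref{ass:regularity}(a) about non-identically distributed clusters is handled by replacing the LLN and CLT with their Lindeberg and uniform versions. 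The boundedness of $\ell_i$ and of the cluster sizes makes Lindeberg's condition immediate.
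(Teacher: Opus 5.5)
Your proposal is correct and follows the same route as the paper's proof. Consistency comes from a uniform law of large numbers, identifiability and the argmax theorem; asymptotic normality comes from a mean-value expansion of the score equation, the central limit theorem, a law of large numbers for the Hessian, and Slutsky's lemma. Your explicit envelope from $R_i(b)\le\kappa_a+n_i\kappa_e$ and your Kullback--Leibler reading of identifiability (almost surely in the design) spell out the domination and unique-maximizer steps that the paper only asserts.

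The one slip is in your aside on non-identically distributed clusters. There, Lindeberg's condition needs a moment bound on the scores $s_i(\theta_0)$, not boundedness of $\ell_i$. This does not affect the theorem as stated, which is the identically distributed version.
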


\begin{proof}
In the stated identically distributed version, the log-likelihood is a sum of independent cluster contributions. Under the compactness, identifiability, continuity, and domination conditions in Assumption~\ref{ass:regularity}, a uniform law of large numbers gives
\[
\sup_{\theta\in\Theta}\left|m^{-1}\ell_m(\theta)-\EE_{\theta_0}\{\ell_i(\theta)\}\right|\xrightarrow{P}0.
\]
For independent non-identically distributed clusters or deterministic triangular arrays, this display is replaced by the corresponding uniform convergence of the average criterion.
The population criterion has a unique maximizer at $\theta_0$ by identifiability, so consistency follows from the argmax theorem \citep[Chapter~5]{VanderVaart1998}. For asymptotic normality, expand the score equation $\sum_i s_i(\hat\theta)=0$ around $\theta_0$:
\[
0=\sum_{i=1}^m s_i(\theta_0)+\left\{\sum_{i=1}^m\nabla_\theta s_i(\bar\theta)\right\}(\hat\theta-\theta_0),
\]
where $\bar\theta$ is a point on the line segment joining $\hat\theta$ and $\theta_0$. The multivariate central limit theorem gives
\[
m^{-1/2}\sum_{i=1}^m s_i(\theta_0)\xrightarrow{d}\N\{0,I(\theta_0)\},
\]
and the law of large numbers gives
\[
-m^{-1}\sum_{i=1}^m\nabla_\theta s_i(\bar\theta)\xrightarrow{P}I(\theta_0).
\]
Slutsky's theorem \citep[Chapter~2]{VanderVaart1998} yields \eqref{eq:mle_asymp}.
\end{proof}

If the likelihood is misspecified, the same argument yields the usual sandwich covariance for the pseudo-true parameter \citep{White1982,VanderVaart1998}. With few clusters, profile likelihood or cluster bootstrap should be preferred to Wald intervals.

\subsection{Working random-slope distribution}\label{subsec:working_distribution}

The Gaussian distribution for $b_i$ is a convenient working model. It gives a parsimonious scalar random coefficient and enables stable one-dimensional quadrature. In applications, however, the true distribution of cluster-specific slope heterogeneity may be skewed, heavy-tailed, or discrete. The exact-likelihood theory above then becomes a working-likelihood theory.

Let $\ell_i^{W}(\theta)$ denote the cluster log-likelihood computed under the working Gaussian random-slope distribution, even when the data are generated by a possibly different cluster distribution. Define
\[
\Psi(\theta)=\EE\{\ell_i^{W}(\theta)\},\qquad
\theta^*=\argmaxop_{\theta\in\Theta}\Psi(\theta).
\]

\begin{theorem}[Pseudo-likelihood limit under misspecified random-slope distribution]\label{thm:pseudo_random_slope}
Assume the cluster-asymptotic conditions of Assumption~\ref{ass:regularity}, replacing correct specification by the requirement that $\Psi(\theta)$ has a unique interior maximizer $\theta^*$ and that the usual M-estimation differentiability and moment conditions hold. Let $\hat\theta_W$ maximize the working log-likelihood $\sum_i\ell_i^W(\theta)$. Then
\[
\hat\theta_W\xrightarrow{P}\theta^*,
\]
and
\[
\sqrt m(\hat\theta_W-\theta^*)
\overset{d}{\longrightarrow}
\N\{0,A(\theta^*)^{-1}B(\theta^*)A(\theta^*)^{-1}\},
\]
where
\[
A(\theta)=-\EE\{\nabla_\theta^2\ell_i^W(\theta)\},\qquad
B(\theta)=\EE\{\nabla_\theta\ell_i^W(\theta)\nabla_\theta\ell_i^W(\theta)^\T\}.
\]
\end{theorem}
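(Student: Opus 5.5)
The argument follows the standard M-estimation route of \citet[Chapter~5]{VanderVaart1998} and mirrors the proof of Theorem~\ref{thm:mle}. The only change is that the population criterion is now $\Psi$ rather than the expected true log-likelihood, so the information identity is unavailable. The working cluster contributions $\ell_i^W(\theta)$ are independent and identically distributed across clusters, since they are measurable functions of independent cluster-design pairs. The integrand $L_i(b;\theta)\phi(b;0,\tau^2)$ in \eqref{eq:marginal_likelihood_scalar} is continuous in $\theta$ for almost every $b$ under Assumption~\ref{ass:noncancellation}.

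\textbf{Consistency.} First I will establish a uniform law of large numbers
\[
\sup_{\theta\in\Theta}\bigl|m^{-1}\textstyle\sum_i\ell_i^W(\theta)-\Psi(\theta)\bigr|\xrightarrow{P}0 .
\]
Compactness of $\Theta$, continuity of $\theta\mapsto\ell_i^W(\theta)$, and an integrable envelope $\EE\sup_\Theta|\ell_i^W(\theta)|<\infty$ give this by the Glivenko--Cantelli argument for continuous parametric classes. The envelope is where the bounded cluster sizes of Assumption~\ref{ass:regularity}(b) enter. By Proposition~\ref{prop:closedform_scalar}, $|R_i(b)|\le\kappa_a+n_{\max}\kappa_e$. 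Hence $L_i(b;\theta)$ is bounded above and below by positive constants that are uniform over compact $\Theta$ with concentrations bounded away from $0$ and $\infty$. Integrating against the Gaussian density preserves these bounds, so $\ell_i^W$ is uniformly bounded and no moment condition on the true slope distribution is needed. The degenerate case $\tau^2=0$ is handled by continuity in $\tau$, since $L_i(\tau\eta;\theta)$ in \eqref{eq:standardized_integral} is bounded and continuous in $\tau$. Since $\theta^*$ is the unique maximizer of the continuous function $\Psi$ on a compact set, it is well separated. The argmax theorem then gives $\hat\theta_W\xrightarrow{P}\theta^*$.

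\textbf{Asymptotic normality.} Because $\theta^*$ is interior and $\hat\theta_W$ is consistent, $\hat\theta_W$ eventually solves the working score equation $\sum_i\nabla_\theta\ell_i^W(\hat\theta_W)=0$ with probability tending to one. I will expand this equation around $\theta^*$ with a mean-value point $\bar\theta$, applied row by row. Differentiating $\Psi$ under the expectation, which the assumed differentiability and moment conditions permit, gives $\EE\{\nabla_\theta\ell_i^W(\theta^*)\}=\nabla\Psi(\theta^*)=0$. The multivariate central limit theorem then yields $m^{-1/2}\sum_i\nabla_\theta\ell_i^W(\theta^*)\xrightarrow{d}\N\{0,B(\theta^*)\}$.

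For the Hessian term, a uniform law of large numbers for $\nabla^2_\theta\ell_i^W$ on a neighbourhood of $\theta^*$, together with continuity of $A$ and consistency, gives $-m^{-1}\sum_i\nabla^2_\theta\ell_i^W(\bar\theta)\xrightarrow{P}A(\theta^*)$. I take $A(\theta^*)$ to be nonsingular; this is part of the ``usual M-estimation conditions'', and uniqueness of an interior maximum makes it natural. Solving the expansion for $\sqrt m(\hat\theta_W-\theta^*)$ and applying Slutsky's theorem gives the stated sandwich limit $A^{-1}BA^{-1}$. Unlike Theorem~\ref{thm:mle}, $A\neq B$ in general, because the Bartlett identity fails under misspecification.

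\textbf{Main obstacle.} I expect the hard step to be justifying the domination conditions for the second derivatives: differentiating twice under the Gaussian integral and bounding the result uniformly near $\theta^*$. The derivatives of $L_i(b;\theta)$ in $\beta$ involve $\partial\mu_{ij}/\partial b$-type terms as in \eqref{eq:mu_derivative_general}, which carry the factor $\norm{\gamma_{ij}(b)}^{-2}$. These terms blow up near the cancelling value $b_{ij}^\star$. I would handle this by using the bounded representation \eqref{eq:unit_identity} and bounding $\norm{\gamma_{ij}(b)}^{-1}$ by $|b-b_{ij}^\star|^{-1}\norm{d_{ij}}^{-1}$. This singularity is locally integrable against the Gaussian density only to order one, which is not enough for the derivatives. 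If that fails, I would fall back on what the statement actually permits, namely taking these integrability conditions as the assumed M-estimation regularity, exactly as Assumption~\ref{ass:regularity}(e) does for the correctly specified case.
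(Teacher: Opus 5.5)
Your proposal is correct and follows the route the paper indicates for this result (the M-estimation argument of Theorem~\ref{thm:mle} with the information identity dropped, giving the White sandwich). Your observation that $L_i(b;\theta)$ is bounded above and below by positive constants, so that $\ell_i^W$ has a bounded envelope whatever the true slope law, is a useful concrete addition that the paper leaves implicit; one small correction to your obstacle paragraph is that $|b-b_{ij}^\star|^{-1}$ is not locally integrable at all (only powers strictly below one are), so the natural dominating bound already fails for first $\beta$-derivatives near exact cancellation, which is why falling back on the assumed differentiability and domination conditions, as the statement permits, is the right move.
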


This result does not imply that the Gaussian working model recovers the parameters of a non-Gaussian random-slope distribution when cluster sizes are bounded. It only states that the estimator has a well-defined pseudo-true target and that cluster-robust sandwich covariance estimates the first-order variability around that target.

\section{Numerical integration and approximate estimators}\label{sec:quadrature}

The scalar random-slope model leaves a one-dimensional integral. This makes it possible to use deterministic integration as the primary likelihood evaluation rather than as a diagnostic afterthought. We distinguish two numerical roles. Fixed-grid \GH is used to define the locked reporting objective, so candidate models are compared under the same deterministic approximation. Laplace and adaptive \GH approximations are useful comparators and diagnostic checks because they centre the numerical rule near the cluster-specific posterior mode.

\subsection{Fixed-grid Gauss-Hermite quadrature}\label{subsec:gh}

We use the standard \GH rule for Gaussian likelihood integration; see
\citet{LiuPierce1994} for a concise account, and
\citet{PinheiroBates1995} for adaptive variants in nonlinear mixed-effects
models.
Let $K$ be the number of quadrature points. Let $(x_k,w_k)$, $k=1,\ldots,K$, be the corresponding nodes and weights for the classical \GH rule
\[
\int_{-\infty}^{\infty} f(x)e^{-x^2}\,dx.
\]
The nodes are the evaluation points on the real line and the weights specify how those evaluations are averaged. Since \eqref{eq:standardized_integral} is an expectation with respect to a standard normal variable, the change of variable $b=\sqrt{2}\tau x$ gives the fixed-grid \GH approximation to \eqref{eq:marginal_likelihood_scalar}:
\begin{equation}\label{eq:gh}
L_i^{\mathrm{GH},K}(\theta)=\frac{1}{\sqrt\pi}\sum_{k=1}^K w_k L_i(\sqrt{2}\tau x_k;\theta).
\end{equation}
The corresponding approximate log-likelihood is
\[
\ell_m^{\mathrm{GH},K}(\theta)=\sum_{i=1}^m\log L_i^{\mathrm{GH},K}(\theta).
\]
Because the dimension is one, increasing $K$ is inexpensive enough to support a locked fixed-grid benchmark alongside high-accuracy one-dimensional evaluation.

\subsection{Laplace and adaptive quadrature}\label{subsec:laplace}

Laplace and adaptive \GH approximations can be useful numerical comparators because they centre the approximation near the cluster-specific posterior mode. They are not the active reporting objective in the empirical sections unless explicitly generated and labelled. The empirical sections report fixed-grid \GH fits under the locked objective and use high-accuracy one-dimensional evaluation as the benchmark when feasible.

\subsection{Asymptotic equivalence of quadrature and the exact-target likelihood}\label{subsec:quadrature_theory}

Let $\hat\theta_K$ maximize $\ell_m^{\mathrm{GH},K}(\theta)$ and let $\hat\theta$ maximize the exact-target log-likelihood. The next result gives a useful template for connecting numerical accuracy and statistical inference.

\begin{assumption}[Uniform quadrature accuracy]\label{ass:quad_accuracy}
There exists a deterministic sequence $a_K\downarrow0$ such that, for derivatives up to order two,
\[
\sup_{\theta\in\Theta}\left\|\nabla_\theta^r\{\ell_i^{\mathrm{GH},K}(\theta)-\ell_i(\theta)\}\right\|=O(a_K),
\qquad r=0,1,2,
\]
uniformly in $i$.
\end{assumption}

\begin{theorem}[Equivalence of increasing-order quadrature and exact-target MLE]\label{thm:quad_equiv}
Assume the conditions of Theorem~\ref{thm:mle} and Assumption~\ref{ass:quad_accuracy}. If $K=K_m$ is chosen so that
\[
\sqrt m\,a_{K_m}\to0,
\]
then
\[
\sqrt m(\hat\theta_{K_m}-\hat\theta)=o_P(1).
\]
Consequently, $\hat\theta_{K_m}$ has the same first-order asymptotic distribution as the exact-target MLE.
\end{theorem}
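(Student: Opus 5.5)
The plan is to treat $\hat\theta_{K_m}$ as an approximate zero of the exact score and compare the two estimators through a one-step Taylor expansion of the exact-target score around $\hat\theta$. Write $S_m(\theta)=\sum_i\nabla_\theta\ell_i(\theta)$ and $S_m^{K}(\theta)=\sum_i\nabla_\theta\ell_i^{\mathrm{GH},K}(\theta)$. By Assumption~\ref{ass:quad_accuracy} with $r=0$,
\[
\sup_{\theta\in\Theta}\bigl|m^{-1}\ell_m^{\mathrm{GH},K_m}(\theta)-m^{-1}\ell_m(\theta)\bigr|=O(a_{K_m})\to0 .
\]
Combining this with the uniform law of large numbers from the proof of Theorem~\ref{thm:mle}, the approximate criterion converges uniformly to $\EE_{\theta_0}\{\ell_i(\theta)\}$. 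The argmax theorem, via identifiability (Proposition~\ref{prop:id_scalar}), then gives $\hat\theta_{K_m}\xrightarrow{P}\theta_0$, so both estimators lie in the interior neighbourhood of $\theta_0$ with probability tending to one.

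The next step uses the score equations. Because $\theta_0$ is interior and both estimators are consistent, with probability tending to one $S_m(\hat\theta)=0$ and $S_m^{K_m}(\hat\theta_{K_m})=0$. Assumption~\ref{ass:quad_accuracy} with $r=1$, summed over clusters using the uniformity in $i$, gives
\[
\bigl\|S_m(\hat\theta_{K_m})\bigr\|=\bigl\|S_m(\hat\theta_{K_m})-S_m^{K_m}(\hat\theta_{K_m})\bigr\|=O(m\,a_{K_m}).
\]
A mean-value expansion of $S_m$ between $\hat\theta$ and $\hat\theta_{K_m}$, taken componentwise with intermediate points $\bar\theta$, yields
\[
m^{-1}S_m(\hat\theta_{K_m})=\Bigl\{m^{-1}\sum_i\nabla^2_\theta\ell_i(\bar\theta)\Bigr\}(\hat\theta_{K_m}-\hat\theta).
\]
Since $\bar\theta\to\theta_0$ in probability, the bracketed matrix converges to $-I(\theta_0)$. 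This uses the same local uniform LLN for the Hessian as in Theorem~\ref{thm:mle}, justified by continuity of $\nabla^2\ell_i$ and the domination in Assumption~\ref{ass:regularity}(e).

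Because $I(\theta_0)$ is nonsingular by Assumption~\ref{ass:regularity}(f), the Hessian matrix is invertible with probability tending to one, with inverse $O_P(1)$. Hence $\hat\theta_{K_m}-\hat\theta=O_P(a_{K_m})$, and therefore
\[
\sqrt m(\hat\theta_{K_m}-\hat\theta)=O_P(\sqrt m\,a_{K_m})=o_P(1).
\]
Slutsky's theorem combined with \eqref{eq:mle_asymp} transfers the limit $\N\{0,I(\theta_0)^{-1}\}$ to $\hat\theta_{K_m}$.

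The main obstacle is the consistency step and the localization to the interior. One must ensure that $\hat\theta_{K_m}$ is a genuine interior stationary point, even though the fixed-grid objective could behave differently near the boundary $\tau^2=0$ or near cancellation. Uniform accuracy over all of $\Theta$ is exactly what rules out spurious maximizers, so I would lean on the $r=0$ part of Assumption~\ref{ass:quad_accuracy} for that. The $r=2$ part is only needed if one prefers to expand $S_m^{K_m}$ instead of $S_m$; expanding the exact score as above avoids it.
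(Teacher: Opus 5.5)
Your proof is correct and uses the same one-step mean-value expansion of a score around $\hat\theta$ as the paper, with the score discrepancy bounded by $O(m\,a_{K_m})$ through Assumption~\ref{ass:quad_accuracy}. The differences are minor. You expand the exact score $S_m$ at $\hat\theta_{K_m}$ rather than the approximate score $\tilde S_m$ at $\hat\theta$, so you need only the $r=0,1$ parts of the assumption and the exact Hessian; the paper's claim that $\tilde H_m(\bar\theta)$ is nonsingular implicitly uses $r=2$. You also prove consistency of $\hat\theta_{K_m}$ explicitly from the $r=0$ part, whereas the paper's proof takes this for granted when it treats the Hessian at $\bar\theta$ as nonsingular near $\theta_0$.
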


\begin{proof}
Let $S_m(\theta)=\nabla_\theta\ell_m(\theta)$ and $\tilde S_m(\theta)=\nabla_\theta\ell_m^{\mathrm{GH},K_m}(\theta)$. By Assumption~\ref{ass:quad_accuracy},
\[
\sup_{\theta\in\Theta}\norm{\tilde S_m(\theta)-S_m(\theta)}=O(m a_{K_m}).
\]
Expand $\tilde S_m(\hat\theta_{K_m})=0$ around $\hat\theta$:
\[
0=\tilde S_m(\hat\theta)+\tilde H_m(\bar\theta)(\hat\theta_{K_m}-\hat\theta),
\]
where $\tilde H_m$ is the Hessian of the approximate log-likelihood. Since $S_m(\hat\theta)=0$, the first term is $O_P(m a_{K_m})$. The Hessian is of order $m$ and nonsingular with probability tending to one. Therefore
\[
\hat\theta_{K_m}-\hat\theta=O_P(a_{K_m}),
\]
which implies the claim when $\sqrt m a_{K_m}\to0$.
\end{proof}

For fixed $K$, the estimator generally targets a quadrature-defined pseudo-true parameter rather than the exact likelihood target. This is not a flaw if it is reported transparently, but it changes the interpretation of standard errors and model selection. Consequently, fixed-$K$ estimates should be interpreted as estimates under the locked numerical objective. Application-specific high-accuracy one-dimensional evaluations are used as numerical checks of that locked objective, not as evidence that high-accuracy re-optimization was performed.

Gradient tolerances are part of the numerical evidence, not only optimizer bookkeeping. The reported maximum gradient is the infinity norm of a centered finite-difference gradient of the minimized negative log-likelihood, evaluated at the optimizer solution under the locked objective and with respect to the internal optimizer parameter vector. Non-reference regression coefficients are on their coefficient scale, while $\kappa_e$, $\kappa_a$, and $\tau$ are represented on the log scale; the gradient is not divided by the number of clusters. The finite-difference rule uses $\epsilon=10^{-5}$ and componentwise step $h_j=\epsilon\max(1,|\mathrm{par}_j|)$ on the internal optimizer scale. The gradient is recomputed at the reported optimum and is not taken from the optimizer's stopping criterion. In this circular model the mean direction is $\mu_{ij}(b)=\atan\{\gamma_{ij,2}(b),\gamma_{ij,1}(b)\}$, and \eqref{eq:mu_derivative_general} shows that its local sensitivity contains the factor $\norm{\gamma_{ij}(b)}^{-2}$. A short consensus vector can therefore turn modest coefficient-scale movement into larger angular movement. This is most relevant when quadrature nodes visit high-leverage regions or when $\tau^2$ is close to the boundary. We therefore report two nested numerical labels. Strict eligibility requires maximum gradient at most $10^{-3}$. Acceptable eligibility relaxes only this cutoff, to $10^{-2}$, and only when convergence, Hessian, boundary, quadrature and non-cancellation diagnostics also pass. A result supported only by the acceptable label is treated as sensitivity evidence, not as strong calibrated evidence.

\section{Boundary diagnostics and calibrated testing strategies for the scalar random slope}\label{sec:testing}

A practical question is whether the designated scalar random-slope variance is supported by the data:
\[
H_0:\tau^2=0
\qquad\hbox{versus}\qquad
H_1:\tau^2>0.
\]
Because $\tau^2$ is constrained to be nonnegative, calibrated testing requires boundary-aware reference distributions or bootstrap procedures.

\subsection{Likelihood-ratio calibration}\label{subsec:lrt}

Let $\hat\theta_0$ be the MLE under $H_0$ and $\hat\theta_1$ the MLE under $H_1$. The likelihood-ratio statistic is
\begin{equation}\label{eq:lrt}
\Lambda=2\{\ell_m(\hat\theta_1)-\ell_m(\hat\theta_0)\}.
\end{equation}
Under regularity conditions for a single variance component, the limiting null distribution is often a mixture
\[
\frac{1}{2}\chi_0^2+\frac{1}{2}\chi_1^2,
\]
where $\chi_0^2$ is a point mass at zero \citep{SelfLiang1987,StramLee1994}. Because the angular likelihood is nonlinear and can be weakly identified in small samples, a parametric bootstrap is recommended for final calibration:
\begin{enumerate}[label=(\roman*),leftmargin=*]
\item fit the null model with $\tau^2=0$;
\item simulate clustered angular data from the fitted null model;
\item refit both null and alternative models to each bootstrap dataset;
\item compute the bootstrap distribution of \eqref{eq:lrt}.
\end{enumerate}
The reproducibility pipeline implements this parametric bootstrap. We use its p-value as calibrated evidence only when the strict primary run is eligible and has a finite Monte Carlo standard error. The acceptable $10^{-2}$ run is reported as a secondary numerical sensitivity analysis. If the strict and acceptable analyses agree, we describe the result as robust to the gradient threshold. If they disagree, or if only the acceptable analysis is usable, inference on $\tau^2$ remains descriptive and numerically fragile.

\subsection{Score expansion at the boundary}\label{subsec:score_test}

A score-type diagnostic can be obtained without fitting the random-slope model. Write $\lambda=\tau^2$ and let
\[
g_i(b;\psi)=L_i(b;\psi),
\]
where $\psi=(\beta,\kappa_e,\kappa_a)$ collects the parameters under the null. If $B\sim\N(0,\lambda)$ and $g_i$ is twice differentiable at zero, then
\[
\EE\{g_i(B;\psi)\}=g_i(0;\psi)+\frac{\lambda}{2}g_i''(0;\psi)+o(\lambda).
\]
Therefore the derivative of the cluster log-likelihood with respect to $\lambda$ at $\lambda=0$ is
\begin{equation}\label{eq:variance_score_cluster}
U_i(\psi)=\frac{1}{2}\frac{g_i''(0;\psi)}{g_i(0;\psi)}
=\frac{1}{2}\left[\ell_{i,b}''(0;\psi)+\{\ell_{i,b}'(0;\psi)\}^2\right],
\end{equation}
where $\ell_{i,b}(b;\psi)=\log L_i(b;\psi)$. The aggregate score diagnostic is
\begin{equation}\label{eq:variance_score}
U(\hat\psi_0)=\sum_{i=1}^m U_i(\hat\psi_0),
\end{equation}
where $\hat\psi_0$ is the null-model estimate. The sign and magnitude of $U(\hat\psi_0)$ summarize the local change in the null likelihood when an infinitesimal variance is allowed for the chosen random-slope target. A large positive value is a screening indication that the designated coefficient may be locally heterogeneous across clusters, but it is not by itself a calibrated test. A standardized version can use the empirical variance of the cluster contributions, or the null distribution can be obtained by parametric bootstrap. This score is reported as a score-type diagnostic unless a null calibration is explicitly supplied. In particular, the empirical standardized version should not be interpreted as a chi-square score test. In the empirical sections, the statistic is not used for calibrated model selection unless its null distribution is explicitly reported. The diagnostic is useful as a screening tool because it asks whether allowing small variance in the designated random slope would locally improve the likelihood.

\section{Prediction and interpretation}\label{sec:prediction}

\subsection{Empirical-Bayes prediction of the scalar random slope}\label{subsec:eb_slope}

Given $\hat\theta$, the posterior density of $b_i$ is
\begin{equation}\label{eq:posterior_b}
\pi_i(b\mid y_i;\hat\theta)
=\frac{L_i(b;\hat\theta)\phi(b;0,\hat\tau^2)}{L_i(\hat\theta)}.
\end{equation}
The empirical-Bayes estimate can be either the posterior mean
\begin{equation}\label{eq:eb_mean_b}
\hat b_i=\int b\,\pi_i(b\mid y_i;\hat\theta)\,db
\end{equation}
or the posterior mode. The posterior mean is naturally computed by the same one-dimensional quadrature used for the likelihood. The cluster-specific coefficient estimate is
\[
\hat\beta_{ri}=\hat\beta_r+\hat b_i.
\]
Posterior credible intervals for $b_i$ are descriptive empirical-Bayes intervals and should not be interpreted as unconditional confidence intervals without additional calibration.

\subsection{Prediction of the circular random intercept}\label{subsec:eb_intercept}

Conditionally on $b$ and $y_i$, Proposition~\ref{prop:closedform_scalar} implies
\[
a_i\mid y_i,b\sim \vm\{\phi_i(b),R_i(b)\},
\qquad
\phi_i(b)=\atan\{Y_i(b),X_i(b)\}.
\]
Thus
\[
\EE\{\exp(i a_i)\mid y_i,b\}
=\Aone\{R_i(b)\}\exp\{i\phi_i(b)\}.
\]
Integrating over the posterior of $b_i$ gives the empirical-Bayes first trigonometric moment
\begin{equation}\label{eq:eb_intercept_moment}
M_{a_i}=
\int \Aone\{R_i(b)\}\exp\{i\phi_i(b)\}\pi_i(b\mid y_i;\hat\theta)\,db.
\end{equation}
The predicted circular intercept is
\begin{equation}\label{eq:eb_intercept}
\hat a_i=\Arg(M_{a_i}),
\end{equation}
and $|M_{a_i}|$ summarizes posterior concentration.

\subsection{Local interpretation of random-slope variability}\label{subsec:variance_decomposition}

The scalar random slope changes the mean direction through the derivative of $\mu_{ij}(b)$. Let $\gamma_{ij}(0)=(g_1,g_2)^\T$ and let $d_{ij}=z_{ijr}v(x_{ijr})=(d_1,d_2)^\T$. Then
\begin{equation}\label{eq:mu_derivative}
\dot\mu_{ij}(0)
=\left.\frac{\partial\mu_{ij}(b)}{\partial b}\right|_{b=0}
=\frac{g_1d_2-g_2d_1}{\norm{\gamma_{ij}(0)}^2}.
\end{equation}
A first-order expansion gives
\[
\mu_{ij}(b_i)\approx \mu_{ij}(0)+\dot\mu_{ij}(0)b_i.
\]
Therefore the angular variability induced locally by the scalar random slope is approximately
\begin{equation}\label{eq:local_variance}
\Var\{\mu_{ij}(b_i)\}\approx \dot\mu_{ij}(0)^2\tau^2.
\end{equation}
This quantity is useful for interpretation: a large $\tau^2$ matters only at design points where the target carrying the random slope has strong angular leverage, meaning $|\dot\mu_{ij}(0)|$ is large.

\section{Diagnostics and model criticism}\label{sec:diagnostics}

The scalar random-slope model is intended to be diagnosable. The following checks should be reported with any empirical analysis.

\subsection{Non-degeneracy and posterior geometry}\label{subsec:diagnostic_geometry}

For each fitted cluster, report
\[
\min_j \norm{\gamma_{ij}(\hat b_i)}
\]
and flag observations for which this value is close to zero. Reports should give the threshold, the norm scale, and whether the check used empirical-Bayes posterior means only or also posterior quadrature nodes with non-negligible weight. In addition, inspect the posterior density \eqref{eq:posterior_b}. Multimodality or extreme skewness indicates that Laplace approximations may be unreliable and that quadrature resolution should be increased.

\subsection{Residual diagnostics}\label{subsec:residuals}

Two residuals are useful:
\[
 r_{ij}(\hat b_i)=y_{ij}-\mu_{ij}(\hat b_i),
\qquad
 \hat\varepsilon_{ij}=y_{ij}-\mu_{ij}(\hat b_i)-\hat a_i.
\]
The first evaluates whether the random slope explains cluster-level mean-direction changes. The second evaluates the conditional von Mises measurement-error assumption. Circular QQ plots, residual resultant lengths, and, when available, simulation envelopes should be reported. These diagnostics are model criticism tools, not proof of model correctness.

\subsection{Cluster-level influence}\label{subsec:influence}

Let $s_i(\hat\theta)=\nabla_\theta\ell_i(\hat\theta)$ and
\[
H(\hat\theta)=-\nabla_\theta^2\ell_m(\hat\theta).
\]
Let $\hat\theta_{(-i)}$ denote the estimate obtained after removing cluster $i$.

\begin{proposition}[One-step leave-one-cluster approximation]\label{prop:loo}
Assume the fitted likelihood is locally twice differentiable and $H(\hat\theta)$ is nonsingular. Then the first-order approximation to the leave-one-cluster estimate is
\begin{equation}\label{eq:loo_approx}
\hat\theta_{(-i)}-\hat\theta\approx -H(\hat\theta)^{-1}s_i(\hat\theta).
\end{equation}
\end{proposition}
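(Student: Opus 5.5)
The plan is to treat the leave-one-cluster estimate as the maximizer of a perturbed objective and linearize its stationarity condition around $\hat\theta$. Write $\ell_m(\theta)=\ell_{(-i)}(\theta)+\ell_i(\theta)$, where $\ell_{(-i)}(\theta)=\sum_{i'\ne i}\ell_{i'}(\theta)$. I assume $\hat\theta$ is an interior stationary point of $\ell_m$, so that
\[
\sum_{i'=1}^m s_{i'}(\hat\theta)=0 .
\]
I also assume $\hat\theta_{(-i)}$ is the nearby interior stationary point of $\ell_{(-i)}$, so that $\nabla_\theta\ell_{(-i)}(\hat\theta_{(-i)})=0$. Since the hypotheses are only local twice differentiability and nonsingularity of $H(\hat\theta)$, the statement is a first-order approximation, not a limit theorem. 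The proof should say so explicitly: both estimates are taken as local maximizers in a neighbourhood where the Taylor expansion is valid.

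\textbf{Step 1 (expansion).} Expand the leave-one-out score around $\hat\theta$:
\[
0=\nabla_\theta\ell_{(-i)}(\hat\theta)+\nabla_\theta^2\ell_{(-i)}(\hat\theta)\,(\hat\theta_{(-i)}-\hat\theta)+R_i .
\]
The remainder $R_i$ is $o(\norm{\hat\theta_{(-i)}-\hat\theta})$ by twice differentiability.

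\textbf{Step 2 (rewrite in full-data quantities).} Use the full-data stationarity to get $\nabla_\theta\ell_{(-i)}(\hat\theta)=-s_i(\hat\theta)$. Also write
\[
\nabla_\theta^2\ell_{(-i)}(\hat\theta)=-H(\hat\theta)-\nabla_\theta^2\ell_i(\hat\theta).
\]
Substituting gives
\[
\{H(\hat\theta)+\nabla_\theta^2\ell_i(\hat\theta)\}(\hat\theta_{(-i)}-\hat\theta)= -s_i(\hat\theta)+R_i .
\]

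\textbf{Step 3 (drop the single-cluster Hessian).} Argue that $\nabla_\theta^2\ell_i(\hat\theta)$ is a single-cluster term, whereas $H(\hat\theta)$ aggregates $m$ clusters. So the single-cluster Hessian can be absorbed into the first-order approximation. Under the regime of Assumption~\ref{ass:regularity} it is $O(1)$ against $O(m)$, and $s_i(\hat\theta)$ is $O(1)$. Hence $\hat\theta_{(-i)}-\hat\theta=O(m^{-1})$ and $R_i$ is of smaller order.

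\textbf{Step 4 (invert).} Invert the nonsingular $H(\hat\theta)$ to obtain \eqref{eq:loo_approx}. As a remark, keeping the term $\nabla_\theta^2\ell_i(\hat\theta)$ gives the slightly sharper one-step Newton version. That is exactly one Newton step for $\ell_{(-i)}$ started from $\hat\theta$, which is the standard interpretation.

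\textbf{Main obstacle.} The delicate step is justifying the neglect of $\nabla_\theta^2\ell_i(\hat\theta)$ and of the remainder $R_i$ without a formal asymptotic regime. To handle this I would present the result in two ways. First, as an exact identity for the Newton step applied to the leave-one-out objective, with $H$ replaced by $H-\nabla^2\ell_i$. Second, as the stated approximation, with the approximation error controlled by $\norm{H(\hat\theta)^{-1}}\,\norm{\nabla_\theta^2\ell_i(\hat\theta)}$. That error is small when no single cluster dominates the information, which is precisely what the influence diagnostic is meant to detect.
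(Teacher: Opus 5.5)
Your proposal is correct and follows the paper's route: linearize the leave-one-cluster score $S_m-s_i$ around $\hat\theta$, use $S_m(\hat\theta)=0$ to replace $\nabla_\theta\ell_{(-i)}(\hat\theta)$ by $-s_i(\hat\theta)$, and invert $H(\hat\theta)$. The one difference is that you keep explicit the single-cluster Hessian $\nabla_\theta^2\ell_i(\hat\theta)$, which the paper's expansion silently absorbs into $H(\hat\theta)$; your argument for dropping it and your Newton-step remark sharpen the paper's heuristic without changing the argument.
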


\begin{proof}
Let $S_m(\theta)=\sum_i s_i(\theta)$ be the full-data score, so that $S_m(\hat\theta)=0$. The leave-one-cluster score is $S_{(-i)}(\theta)=S_m(\theta)-s_i(\theta)$. A first-order expansion around $\hat\theta$ gives
\[
0=S_{(-i)}(\hat\theta_{(-i)})
\approx -s_i(\hat\theta)-H(\hat\theta)(\hat\theta_{(-i)}-\hat\theta),
\]
which yields \eqref{eq:loo_approx}.
\end{proof}

This motivates the influence measure
\begin{equation}\label{eq:cook_cluster}
D_i=s_i(\hat\theta)^\T H(\hat\theta)^{-1}s_i(\hat\theta).
\end{equation}
Clusters with large $D_i$ should be inspected for unusual posterior random slopes, near-cancellation of the consensus vector, and residual lack of fit. The diagnostic is especially natural here because the independent sampling units in the asymptotic theory are clusters.

\section{Model comparison under cluster asymptotics}\label{sec:model_comparison}

Model comparison should respect the fact that independent information accumulates at the cluster level. Under the regime used in Theorem~\ref{thm:mle}, $m$ independent cluster likelihood contributions are observed and cluster sizes are bounded. Therefore a BIC-type criterion for comparing marginal models should use $m$, not the raw total number of observations $N=\sum_i n_i$, as the effective sample size:
\begin{equation}\label{eq:bic_m}
\mathrm{BIC}_m=-2\ell_m(\hat\theta)+d\log m,
\end{equation}
where $d$ is the number of free marginal-likelihood parameters in the candidate model \citep{Schwarz1978}. The same quadrature order and likelihood approximation should be used for all candidate models being compared.

For random-slope selection, \eqref{eq:bic_m} is best viewed as a complement to the calibrated procedures in Section~\ref{sec:testing}. The empirical workflow compares pre-specified scalar targets only; model comparison for free loading vectors is outside the scope of the reported analysis because additional loadings are not identifiable under $\tau^2=0$ without further restrictions.

\section{Implementation workflow}\label{sec:implementation}

A practical implementation should treat fixed high-order one-dimensional quadrature as the default numerical likelihood calculation and use high-accuracy one-dimensional integration as a benchmark when feasible. The following workflow is recommended.

\begin{enumerate}[label=(\arabic*),leftmargin=*]
\item Fit the null circular random-intercept model with $\tau^2=0$.
\item When several substantively plausible random-slope targets are available, compute the variance-score diagnostic \eqref{eq:variance_score} as a screening tool for those targets.
\item Choose a substantively justified target $r$ and fit the scalar-random-slope model by fixed-grid \GH quadrature.
\item Increase the number of nodes $K$ until log-likelihood, score, and parameter estimates are stable.
\item When generated, use adaptive \GH and Laplace evaluations as numerical comparators; report the final likelihood under the locked fixed-\GH objective.
\item Treat the variance-score statistic as a screening diagnostic unless a null calibration is supplied. Use a boundary-aware likelihood-ratio reference distribution or a parametric bootstrap only when the calibration has been implemented and checked for the analysis at hand.
\item Report empirical-Bayes estimates, posterior geometry of $b_i$, residual diagnostics, and cluster influence diagnostics.
\end{enumerate}

The likelihood comparison must use the same approximation across candidate models. Strict eligibility requires optimizer convergence, maximum gradient at most $10^{-3}$, no numerical boundary for variance or concentration parameters, stable Hessian or profile curvature, and passing non-degeneracy diagnostics. The acceptable label keeps the same requirements but uses the relaxed gradient cutoff $10^{-2}$. It is a sensitivity label, not a standalone inferential rule.

\section{Simulation study}\label{sec:simulations}

The simulation study evaluates only the scalar random-slope workflow.

The simulation design contains 13 scenarios varying the scalar random-slope variance, circular concentrations, number of clusters, number of observations per cluster, and angular-design difficulty. The main practical protocol uses data-driven starts: a null random-intercept fit, a fixed-effect angular fit, a small grid of initial values for $\tau$, and controlled coefficient perturbations. Oracle-assisted starts are retained only as a likelihood-geometry diagnostic and are not used to claim unconditional workflow performance.

Numerical eligibility is a primary simulation outcome. A fit is strictly eligible if it passes all numerical diagnostics, including maximum gradient at most $10^{-3}$. It is acceptably eligible if all non-gradient diagnostics pass and the maximum gradient is at most $10^{-2}$. Bias, RMSE and empirical interval inclusion remain conditional on strict eligibility. Eligibility rates, dominant failure modes, timeouts and Monte Carlo standard errors describe the full workflow.

The stress scenarios are intended to expose failure modes rather than guarantee robust operation in difficult designs. When strict eligibility is low, conditional bias and RMSE summarize only the subset of fits that passed the numerical reporting rule and should not be interpreted as unconditional performance of the workflow.

Simulation standard errors and intervals, when available, are observed-Hessian Wald quantities under the fitted working likelihood. Intervals for positive parameters are constructed on the internal log scale and transformed back to the reported scale. Variance-component interval inclusion is set to missing when the true value is $\tau^2=0$, because that case is a boundary problem rather than a regular Wald-coverage target. The generated text below records the actual run mode and starting protocol of the stored outputs.

\IfFileExists{tables/generated_scalar_sim_text.tex}{The scalar simulation run used final diagnostic Monte Carlo mode with 13 scenario(s), 7800 attempted fit-method run(s), 3836 strictly eligible run(s), and 5928 acceptably eligible run(s). Jobs were run at the method level with data starts. Bias, RMSE, and empirical interval inclusion are conditional on strictly eligible fits and paired with recorded failure rates; they are not unconditional operating characteristics.
}{}

\begin{table}[H]
\centering
\caption{Compact scalar simulation results for the fixed-\GH $K=25$ estimator. Eligibility rates, dominant failure modes and median run times are unconditional workflow outcomes. Bias, RMSE and empirical interval inclusion are conditional on strictly eligible runs. Values in parentheses are Monte Carlo standard errors on the reported scale; interval inclusion for $\tau^2$ is omitted when the true variance is on the boundary.}
\label{tab:simulation_results_main}
\IfFileExists{tables/simulation_results_main.tex}{\resizebox{\textwidth}{!}{%
\begin{tabular}{p{0.16\textwidth}p{0.09\textwidth}p{0.09\textwidth}p{0.14\textwidth}rp{0.08\textwidth}p{0.08\textwidth}p{0.08\textwidth}p{0.08\textwidth}p{0.08\textwidth}p{0.08\textwidth}}
\hline
Scenario & Strict & Accept. & Main failure & Med. sec. & $\beta_1$ bias & $\beta_1$ RMSE & $\beta_1$ incl. & $\tau^2$ bias & $\tau^2$ RMSE & $\tau^2$ incl.
\\
\hline
null baseline & 295/300 (0.0074) & 300/300 (0) & high grad. (5) & 10.4 & 0.043 (0.009) & 0.16 (0.01) & 0.966 (0.011) & 0.054 (0.0058) & 0.114 (0.012) & NA\\
weak slope & 289/300 (0.011) & 300/300 (0) & high grad. (11) & 9.13 & 0.0327 (0.0083) & 0.145 (0.007) & 0.972 (0.0097) & 0.0411 (0.0072) & 0.129 (0.01) & 0.832 (0.027)\\
moderate slope & 261/300 (0.019) & 269/300 (0.018) & timeout 900s (27) & 20.8 & 0.0218 (0.011) & 0.182 (0.0096) & 0.927 (0.016) & 0.0134 (0.011) & 0.186 (0.012) & 0.966 (0.012)\\
large slope & 33/300 (0.018) & 34/300 (0.018) & timeout 900s (266) & 8.74 & -0.0222 (0.031) & 0.178 (0.02) & 0.909 (0.05) & 0.0111 (0.053) & 0.299 (0.039) & 0.97 (0.03)\\
low concentration & 283/300 (0.013) & 299/300 (0.0033) & high grad. (17) & 19.2 & 0.0505 (0.018) & 0.299 (0.02) & 0.929 (0.015) & 0.12 (0.029) & 0.502 (0.069) & 0.942 (0.014)\\
weak random-slope leverage & 282/300 (0.014) & 286/300 (0.012) & timeout 900s (14) & 15.9 & 0.12 (0.024) & 0.42 (0.018) & 0.968 (0.01) & 0.464 (0.063) & 1.16 (0.091) & 0.727 (0.037)\\
nearly collinear targets & 178/300 (0.028) & 229/300 (0.025) & high grad. (97) & 18.5 & 0.409 (0.26) & 3.52 (1) & 0.787 (0.031) & 0.0381 (0.15) & 2 (0.95) & 0.776 (0.04)\\
near-cancellation stress & 77/300 (0.025) & 143/300 (0.029) & high grad. (170) & 24 & 0.0933 (0.021) & 0.202 (0.016) & 0.156 (0.041) & -0.107 (0.012) & 0.148 (0.013) & 0.039 (0.022)\\
\hline
\end{tabular}
}
}{Simulation results unavailable.}
\end{table}

\section{Sandhopper application}\label{sec:sandhopper_application}

The main application uses the sandhopper repeated-orientation data of \citet{DEliaBorgioliScapini2001}. The experiment recorded repeated directional choices of individual sandhoppers under natural conditions, and the same data set was used by \citet{RivestKato2019} to illustrate circular random-intercept modelling. Individuals define clusters, and the response is the repeated circular direction \texttt{LN}. In the prepared long-form data, the environmental target directions are constant within individual, whereas trial order varies within individual. The scalar random-slope target is therefore trial order, encoded as a within-individual modulator of an artificial $\pi/2$ target. This target is not interpreted as an environmental direction; it represents possible individual variation in trial-order drift, adaptation, or orientation change.

\IfFileExists{tables/generated_sandhopper_text.tex}{The analysed data contain 59 individuals and 295 repeated orientation observations. Trial order is encoded through an artificial $\pi/2$ target, so the scalar random slope describes possible individual variation in trial-order drift or adaptation rather than an environmental direction.
Among eligible working fits, AIC and BIC$_m$ favoured the scalar trial-order random slope model under the fixed-GH K=81 reporting objective. Because the primary strict bootstrap reporting rule failed, this comparison is interpreted as descriptive model comparison rather than calibrated evidence for a positive variance component.
The fitted working-model variance is $\hat\tau^2=0.6038$. The final fixed-GH evaluation differed from the high-accuracy one-dimensional benchmark by 0.05036 in total NLL, or about 0.0008536 per individual.
The primary strict bootstrap retained only 37 eligible pairs out of 500 completed replicates (failure rate 0.926), so the primary calibration is not usable. Under the secondary acceptable-gradient rule, 454/500 pairs were eligible (failure rate 0.092) and the bootstrap gave $p=0.002198$ with Monte Carlo standard error 0.002198. This sensitivity result is reported to diagnose the gradient threshold; it is not used alone as strong calibrated evidence for $\tau^2$.
In the leave-one sensitivity refits, 1/8 refits were strictly eligible and 7/8 were acceptably eligible; the largest changes were $|\Delta\tau^2|=0.1068$ and $|\Delta\beta|=0.5853$.
The local one-standard-deviation angular effect $|\dot\mu_{ij}(0)|\hat\tau$ has median 0.123 rad (7.05 deg) and interquartile range 0.0469 rad (2.69 deg) to 0.3 rad (17.2 deg) across observations.
}{Sandhopper summary unavailable.}

The fitted consensus vector for individual $i$ and trial $j$ is
\[
\gamma_{ij}(b_i)
=v(\mathrm{Azimuth}_i)
+\beta_W v(\mathrm{DirW}_i)
+\beta_E z^{E}_i v(\pi/2)
+(\beta_R+b_i)z^{R}_{ij}v(\pi/2),
\]
where $z^{E}_i$ is the scaled eye-shift covariate and $z^{R}_{ij}$ is the scaled trial-order covariate. The reference coefficient for sun azimuth is fixed to one, and the scalar random slope acts on the trial-order coefficient $\beta_R$.

\begin{table}[H]
\centering
\caption{Sandhopper model comparison. AIC and BIC$_m$ are reported only for eligible unpenalized fits, and BIC$_m$ uses the number of individuals.}
\label{tab:sandhopper_model_diagnostics}
\IfFileExists{tables/sandhopper_main_model_summary.tex}{\begin{tabular}{p{0.30\textwidth}p{0.20\textwidth}rrrr}
\hline
Model & Objective & logLik & AIC & BIC$_m$ & $\hat\tau^2$
\\
\hline
random intercept only & fixed-GH K=15 & -313.21 & 634.43 & 642.74 & 0\\
fixed trial-order effect & fixed-GH K=15 & -306.78 & 623.56 & 633.94 & 0\\
scalar trial-order random slope & fixed-GH K=81 & -296.6 & 605.21 & 617.67 & 0.6038\\
\hline
\end{tabular}
}{Sandhopper model summary unavailable.}
\end{table}
For models with $\tau^2=0$, the quadrature order $K$ is immaterial because the scalar Gaussian integral collapses to evaluation at $b=0$. The scalar random-slope model is reported under the locked fixed-\GH $K=81$ objective; AIC and BIC$_m$ are not used to select among quadrature orders.

\begin{table}[H]
\centering
\caption{Sandhopper reported scalar working-model parameter estimates. Standard errors are observed-Hessian standard errors under the locked fixed-\GH working likelihood, with delta-method transformations for positive parameters. The standard error for $\tau^2$ is descriptive near the variance boundary and is not a Wald test. These standard errors do not incorporate uncertainty from model selection, quadrature-order selection, or boundary calibration.}
\label{tab:sandhopper_estimates}
\IfFileExists{tables/sandhopper_main_parameter_estimates.tex}{\begin{tabular}{p{0.58\textwidth}rl}
\hline
Parameter & Estimate & SE
\\
\hline
sun azimuth coefficient & 1 & fixed\\
DirW coefficient & -0.01508 & 0.1053\\
artificial $\pi/2$ eye-shift coefficient & 0.4128 & 0.4241\\
artificial $\pi/2$ trial-order coefficient & -0.3725 & 0.1813\\
measurement concentration $\kappa_e$ & 5.224 & 0.4899\\
intercept concentration $\kappa_a$ & 2.157 & 0.3794\\
random-slope variance $\tau^2$ & 0.6038 & 0.1756\\
\hline
\end{tabular}
}{Sandhopper parameter estimates unavailable.}
\par\smallskip\footnotesize The reference coefficient is fixed to one and is not estimated. The eye-shift and trial-order coefficients are coefficients for the artificial $\pi/2$ target modulated by the corresponding covariates.
\end{table}

The variance-component comparison is calibrated by parametric bootstrap only when the strict primary run meets the reporting rule: at least 500 requested replicates, at least 400 eligible null-alternative pairs, failure rate no larger than 20\%, and finite $p$-value with Monte Carlo standard error. The acceptable $10^{-2}$ run is shown as secondary numerical sensitivity. It can support a robustness statement only when it agrees with a usable strict analysis. By itself, it does not turn the fitted $\tau^2$ into a strongly calibrated test result.

\begin{table}[H]
\centering
\caption{Sandhopper parametric bootstrap calibration for the scalar random-slope variance. The strict $10^{-3}$ rule is primary. The acceptable $10^{-2}$ rule keeps all other diagnostics unchanged and is reported only as numerical sensitivity, not as standalone calibrated evidence.}
\label{tab:sandhopper_bootstrap_lrt}
\IfFileExists{tables/sandhopper_bootstrap_lrt.tex}{\resizebox{\textwidth}{!}{%
\begin{tabular}{llrrrrrrrl}
\hline
Gradient rule & Role & $B$ requested & $B$ completed & $B$ eligible & Failure rate & Observed LRT & $p$ & MCSE & Reporting status
\\
\hline
Strict ($10^{-3}$) & primary & 500 & 500 & 37 & 0.926 & 20.35 & 0.02632 & 0.02632 & fails primary rule\\
Acceptable ($10^{-2}$) & secondary sensitivity & 500 & 500 & 454 & 0.092 & 20.35 & 0.002198 & 0.002198 & meets sensitivity rule only\\
\hline
\end{tabular}
}
}{Sandhopper bootstrap calibration unavailable.}
\end{table}

The estimated variance $\tau^2$ is a variance on the coefficient of the artificial trial-order target, not a direct angular variance. Its angular effect depends on the local sensitivity of the fitted mean direction to the scalar random slope. We therefore summarize the descriptive local scale $|\dot\mu_{ij}(0)|\hat\tau$ under the final fixed-\GH estimate.

\begin{table}[H]
\centering
\caption{Sandhopper local angular effect of one standard deviation of the scalar random slope. The summary uses $|\dot\mu_{ij}(0)|\hat\tau$ over all observations under the final fixed-\GH estimate. It is descriptive and depends on the fitted trial-order target.}
\label{tab:sandhopper_local_effect}
\IfFileExists{tables/sandhopper_local_angular_effect.tex}{\begin{tabular}{p{0.34\textwidth}p{0.54\textwidth}}
\hline
Diagnostic & Value
\\
\hline
Effect summary & local derivative at b=0 times tau-hat\\
Observations used & 295\\
Estimated tau & 0.777\\
Median one-SD angular effect & 0.123 rad (7.05 deg)\\
IQR one-SD angular effect & 0.0469 rad (2.69 deg) to 0.3 rad (17.2 deg)\\
Range of one-SD angular effect & 0 rad (0 deg) to 0.845 rad (48.4 deg)\\
\hline
\end{tabular}
}{Sandhopper local angular-effect summary unavailable.}
\end{table}

\begin{table}[H]
\centering
\caption{Sandhopper compact diagnostics for the reported scalar working model. Non-cancellation diagnostics use the Euclidean norm threshold $\|\gamma_{ij}\|\le 10^{-8}$ at fitted empirical-Bayes posterior means. This check does not rule out near-cancellation at all possible values of the random slope.}
\label{tab:sandhopper_compact_diagnostics}
\IfFileExists{tables/sandhopper_compact_diagnostics.tex}{\begin{tabular}{p{0.34\textwidth}p{0.54\textwidth}}
\hline
Diagnostic & Value
\\
\hline
EB b mean range & -1.291 to 1.388\\
Posterior SD range & 0.2608 to 0.7461\\
Minimum gamma norm & 0.2737\\
Near-cancellation clusters & 0\\
Full residual resultant length & 0.9221\\
Residual resultant length without intercept & 0.6562\\
\hline
\end{tabular}
}{Sandhopper compact diagnostics unavailable.}
\end{table}

\begin{figure}[H]
\centering
\IfFileExists{figures/sandhopper_scalar_eb_slopes.pdf}{\includegraphics[width=0.82\textwidth]{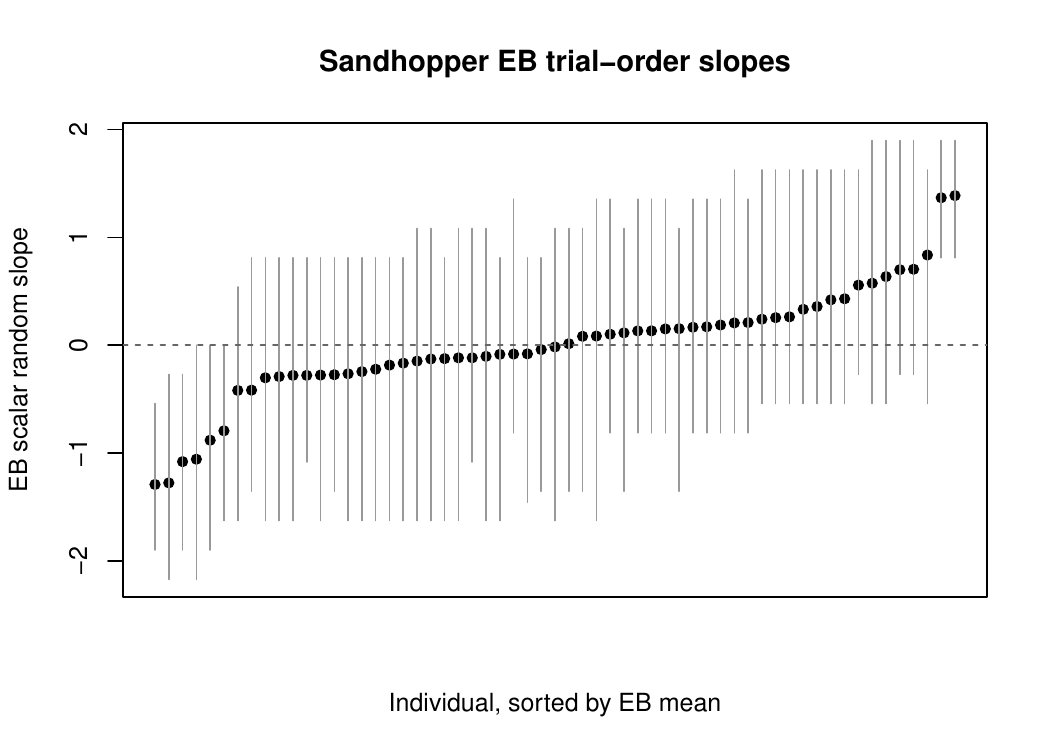}}{}
\caption{Sandhopper empirical-Bayes summaries for the trial-order scalar random slope. Points are posterior means of $b_i$ and vertical bars are posterior 2.5\% and 97.5\% quantiles. These are shrinkage summaries of individual heterogeneity, not environmental-direction effects. They are not simultaneous intervals and do not by themselves constitute a calibrated test of individual heterogeneity.}
\label{fig:sandhopper_eb_slopes}
\end{figure}

\begin{figure}[H]
\centering
\IfFileExists{figures/sandhopper_scalar_residual_envelope.pdf}{\includegraphics[width=0.72\textwidth]{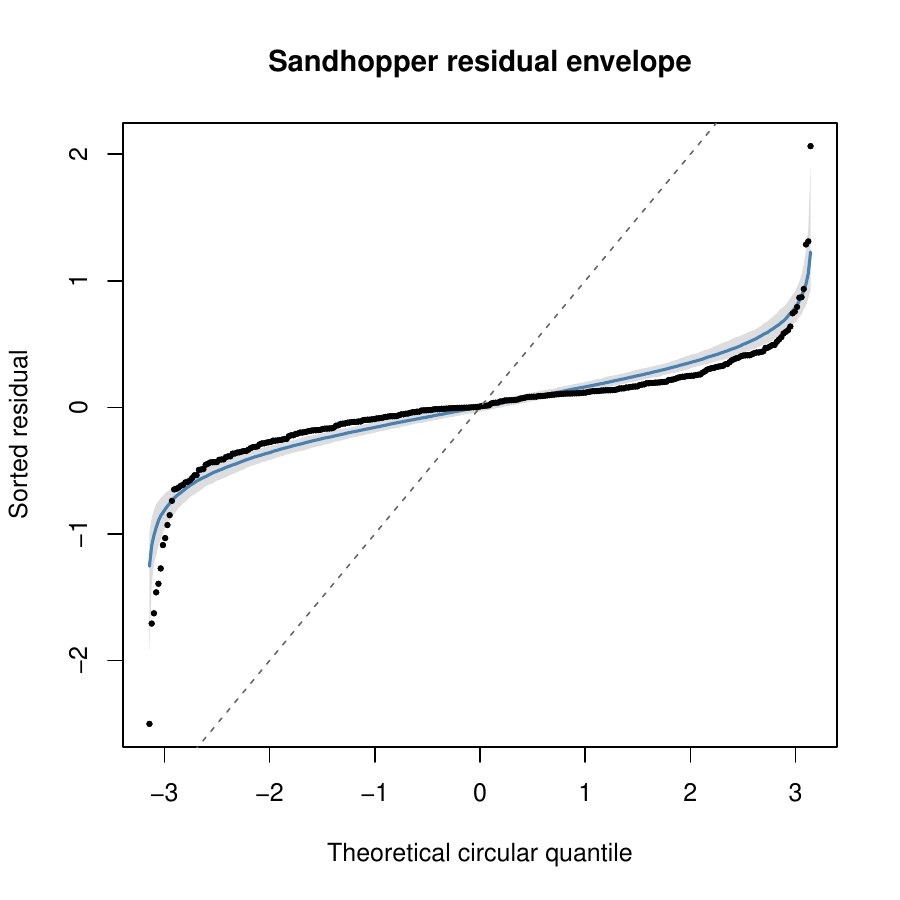}}{}
\caption{Sandhopper residual diagnostic with simulation envelope under the reported scalar working model. The envelope is a model-criticism diagnostic and is not a formal goodness-of-fit test.}
\label{fig:sandhopper_residual_envelope}
\end{figure}

\begin{table}[H]
\centering
\caption{Leave-one-individual refits for the most influential sandhopper clusters. These refits check whether the fitted scalar random-slope summary is driven by a small number of individuals; rows marked ineligible are descriptive numerical diagnostics and are not treated as stable inferential refits.}
\label{tab:sandhopper_leave_one}
\IfFileExists{tables/sandhopper_leave_one_cluster_refits.tex}{\resizebox{\textwidth}{!}{%
\begin{tabular}{lrlrrr}
\hline
Cluster & Influence $D$ & Refit eligible & Refit $\tau^2$ & $\Delta\tau^2$ & Max $|\Delta\beta|$
\\
\hline
11 & 1.5432 & FALSE & 0.6133 & 0.009506 & 0.3475\\
 5 & 0.91305 & FALSE & 0.6406 & 0.03684 & 0.04701\\
28 & 0.75531 & TRUE & 0.5645 & -0.03929 & 0.3131\\
20 & 0.58463 & FALSE & 0.497 & -0.1068 & 0.1256\\
12 & 0.50359 & FALSE & 0.6005 & -0.003333 & 0.5853\\
\hline
\end{tabular}
}
}{Sandhopper leave-one-cluster refits unavailable.}
\end{table}

The sandhopper analysis illustrates the intended clustered circular setting and the diagnostic workflow. The reported scalar working model passed the recorded quadrature-stability checks, EB-mean non-cancellation checks, and posterior-node non-cancellation checks for the locked $K=81$ objective, but the strict bootstrap and leave-one-cluster refits exposed numerical fragility. The acceptable $10^{-2}$ sensitivity run helps show how much of this fragility is tied to the gradient threshold. Since the primary strict calibration is not usable, conclusions about $\tau^2$ remain descriptive. The analysis does not imply that the scalar random slope captures a natural environmental cue, and model criticism remains part of the empirical evidence.

\section{Discussion}\label{sec:discussion}

The scalar-random-slope restriction creates a practical middle ground between two extremes. A circular random-intercept model is stable and interpretable, but it cannot represent cluster-specific regression effects. Multi-dimensional random-slope models are flexible, but they introduce high-dimensional numerical integration, stronger identifiability requirements, and unstable variance-component estimation. The proposed model adds one pre-specified source of coefficient heterogeneity while preserving a one-dimensional exact marginal likelihood target.

The scalar restriction makes several parts of the analysis tractable under explicit high-level conditions: the random-intercept integration is analytic, the remaining likelihood target is one-dimensional, and design limitations can be linked to observable diagnostics. Under the design bridge, the random coefficient law is determined by one mean vector and one variance component; away from the variance boundary, exact-target likelihood theory follows from standard cluster asymptotics. The main applied advantage is diagnostic: posterior random-slope densities, non-cancellation checks, residual plots, and cluster influence measures are available and interpretable for eligible fits, with numerical diagnostics required.

The empirical analysis shows both usefulness and limitations. The sandhopper application is well aligned with clustered circular orientation data, but its scalar slope is a trial-order effect rather than a natural environmental direction. This illustrates an important practical point: the scalar random slope should be interpreted only when the chosen target is substantively meaningful and numerically supported. The model therefore should be used as a parsimonious, diagnosable extension, not as automatic evidence for meaningful random-slope heterogeneity in every circular data set.

The model does not solve every problem in angular mixed modelling. It requires a pre-specified random-slope target, assumes a Gaussian distribution for the scalar random slope, and still relies on sufficiently rich angular designs. The rank-one loading formulation remains a controlled theoretical route beyond a single target, while still preserving one-dimensional integration, but it is not implemented in the empirical workflow reported here. Multiple scalar extensions, robust random-effect distributions, and tangent-space random effects are natural further developments.

\bibliographystyle{plainnat}
\bibliography{angmix_scalar_random_slope_refs}

\end{document}